\newtheorem{ass}{Assumption}
\newtheorem{thm}{Theorem}
\newtheorem{prop}{Proposition}
\newtheorem{lem}{Lemma}
\title{\LARGE \bf
Error Bounds for Reduced Order Model Predictive Control
}
\author{Joseph Lorenzetti, Marco Pavone	
\thanks{The authors are with the Department of Aeronautics and Astronautics, Stanford University, Stanford CA. Emails: \{jlorenze, pavone\}@stanford.edu.}
\thanks{This work was supported by the Office of Naval Research  (Grant N00014-17-1-2749). Joseph Lorenzetti is supported by the Department of Defense (DoD) through the National Defense Science and Engineering Fellowship (NDSEG) Program.}
}
\begin{document}

\maketitle
\thispagestyle{empty}
\pagestyle{empty}

\begin{abstract}
Model predictive control is a powerful framework for enabling optimal control of constrained systems. However, for systems that are described by high-dimensional state spaces this framework can be too computationally demanding for real-time control. Reduced order model predictive control (ROMPC) frameworks address this issue by leveraging model reduction techniques to compress the state space model used in the online optimal control problem. While this can enable real-time control by decreasing the online computational requirements, these model reductions introduce approximation errors that must be accounted for to guarantee constraint satisfaction and closed-loop stability for the controlled high-dimensional system. 
In this work we propose an offline methodology for efficiently computing error bounds arising from model reduction, and show how they can be used to guarantee constraint satisfaction in a previously proposed ROMPC framework. This work considers linear, discrete, time-invariant systems that are compressed by Petrov-Galerkin projections, and considers output-feedback settings where the system is also subject to bounded disturbances.
\end{abstract}

\section{Introduction}
Some system models, such as those arising from finite approximations to infinite-dimensional systems (e.g. systems described by partial differential equations), may have high-dimensional state spaces that make model-based controller design challenging. This class of systems includes several of practical interest, such as flexible structures \cite{RaoPanEtAl1990}, soft robots \cite{GouryDuriez2018, KatzschmannThieffryEtAl2019}, heating, ventilation, and air conditioning systems \cite{HeGonzalez2016}, coupled fluid/structure systems \cite{AmsallemDeolalikarEtAl2013}, and more. While controllers for these systems could be designed without the use of models, or with low-fidelity hand-engineered models, these may not be suitable for \textit{constrained control} problems or for \textit{optimizing performance}. 

When high-fidelity models of the system are available, constrained optimal control problems are typically approached through receding horizon schemes such as model predictive control (MPC), which optimizes over future behavior by embedding the system model into an optimization problem that is solved online.
However in the case where the high-fidelity models are also high-dimensional, the online computational complexity of the scheme precludes its use for real-time control. Fortunately, principled techniques have been developed to generate reduced order models (ROMs) by compressing the high-dimensional full order models (FOMs) \cite{Antoulas2005}. The resulting dimension reduction can then enable the ROMs to be used for real-time \textit{constrained optimal} control via MPC, while implicitly leveraging the modeling accuracy of the high-fidelity FOM.
However, synthesizing the controller using the ROM could still lead to constraint violations by the controlled FOM, even if the approximation error induced by model reduction is small. Therefore, to \textit{guarantee} constraint satisfaction the approximations errors must be explicitly considered when designing the reduced order model predictive control (ROMPC) scheme. 

\begin{figure}
    \centering
    \includegraphics[width=0.8\columnwidth]{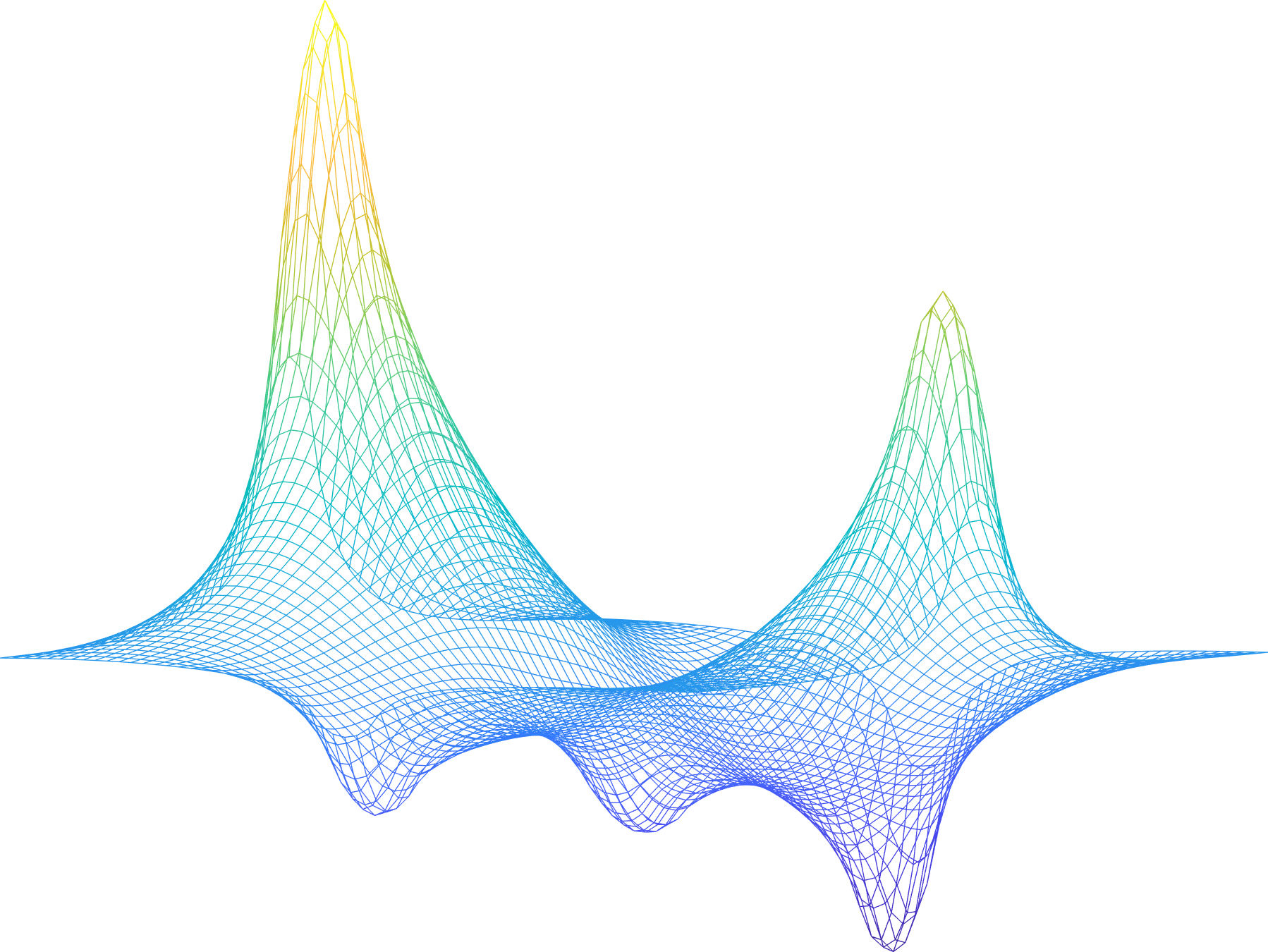}
    \caption{Temperature profile from the 2D heat flow problem discussed in Section \ref{subsec:heatflow}, using a finite element model with $n^f = 3481$ states.}
    \label{fig:tempmesh}
\vspace{-.20in}
\end{figure}

In this work we propose a method for efficiently computing \textit{a priori} bounds on the possible constraint violations, which can then be used during the synthesis of the ROMPC scheme to ensure constraint satisfaction. Specifically, we consider the tube-based output feedback ROMPC scheme proposed by \cite{LorenzettiLandryEtAl2019} which uses ROMs to design both the state estimator and the online optimization problem. In addition to considering possible constraint violation due to the model reduction error we also consider errors arising from state estimation and bounded disturbances.

{\em Related Work:} Previous linear model predictive control schemes that utilize reduced order models vary in how constraints are handled and what kind of performance guarantees can be made. Early works such as \cite{AstridHuismanEtAl2002} and \cite{HovlandGravdahlEtAl2008} do not consider state constraints and also do not provide guarantees on closed-loop stability. Stability guarantees via large semi-definite programs are available for the method from \cite{HovlandLovaasEtAl2008}, however this work only considers ``soft'' state constraints (through cost function penalties). Another method with stability guarantees is \cite{SopasakisBernardiniEtAl2013}, which is the first to also provide constraint satisfaction guarantees. However the constraints are limited to being a function of the reduced order state.

The approach in \cite{LoehningRebleEtAl2014} then considers general state and control constraints, and provides both asymptotic stability and constraint satisfaction guarantees. Although this approach assumes full state feedback, we note that this approach could be extended to output feedback settings. However the disadvantage of this approach comes through the incorporation of a scalar error bounding system into the MPC scheme. This is done such that error bounds do not need to be computed \textit{a priori}, but which leads to increased conservatism that can limit the controller's performance and initial feasible set. 

The use of linear programming to compute \textit{a priori} error bounds is then proposed in \cite{KoegelFindeisen2015b}, which uses a tube-based ROMPC scheme, explicitly considers output feedback, and provides guarantees on stability and constraint satisfaction. This is then extended in \cite{LorenzettiLandryEtAl2019} to yield less conservative error bounds and also consider the problem of setpoint tracking under model reduction error. The main disadvantage with these methods is that the linear programs utilize the FOM dynamics, and therefore do not scale well with problem size. 

In this work we use the ROMPC scheme proposed in \cite{LorenzettiLandryEtAl2019}, but compute error bounds in a novel way that blends some of the ideas from \cite{LoehningRebleEtAl2014}, \cite{KoegelFindeisen2015b}, and \cite{LorenzettiLandryEtAl2019}. This approach is more computationally efficient than the linear programming methods presented in \cite{KoegelFindeisen2015b} and \cite{LorenzettiLandryEtAl2019}, but yields a simpler and less conservative ROMPC scheme than \cite{LoehningRebleEtAl2014}.

{\em Statement of Contributions:}
In this work we propose a novel method for computing \textit{a priori} error bounds that can be used to guarantee robust constraint satisfaction of the controlled high-dimensional system when using a ROMPC scheme. We consider linear, discrete, time-invariant systems where the reduced order model is generated through a Petrov-Galerkin projection of the full order model. Additionally we consider output feedback scenarios where the state estimator is also synthesized using the reduced order model, and where the system is subject to bounded disturbances. This approach improves upon previous methods in efficiency and conservatism of the bounds, and we provide results and comparisons using a small synthetic system described by $n^f = 6$ states as well as a 2D discretized heat flow model described by $n^f = 3481$ states.

{\em Organization:} We begin in Section \ref{sec:prob} by introducing the control problem and the architecture of the ROMPC control scheme. Then in Section \ref{sec:constr_sat} we discuss the robust constraint satisfaction problem that motivates this work. In Sections \ref{sec:bounds} and \ref{sec:G} we propose a novel method for computing error bounds that enable robust constraint satisfaction guarantees to be made. The merits of the approach are then discussed in Section \ref{sec:discussion} before presenting examples in Section \ref{sec:experiments}.

\section{Problem Formulation} \label{sec:prob}
In this section we begin by presenting a mathematical formulation for the control problem and define the system dynamics models and ROMPC architecture.

\subsection{Full Order Model}
Our interest is to control high-dimensional systems, such as those arising from finite approximations to infinite-dimensional systems (e.g. discretized partial differential equations). Specifically we consider cases where the full order system model is expressed in the form
\begin{equation} \label{eq:fom}
\begin{split}
x^f_{k+1} &= A^fx^f_k + B^fu_k + B^f_w w_k, \\
y_k &= C^f x^f_k + v_k, \quad z_k = H^f x^f_k,\\
\end{split}
\end{equation}
where $x^f \in \mathbb{R}^{n^f}$ is the state, $u \in \mathbb{R}^m$ is the control input, $y \in \mathbb{R}^p$ is the measured output, $z \in \mathbb{R}^o$ are performance variables, $w \in \mathbb{R}^{m_w}$ represents disturbances acting on the system dynamics, and $v \in \mathbb{R}^{p}$ is the measurement noise. It is assumed that the disturbances $w$ and $v$ are constrained by
\begin{equation} \label{eq:noise}
w \in \mathcal{W}, \:\: v \in \mathcal{V},
\end{equation}
where $\mathcal{W} \coloneqq \{w \: | \: H_w w \leq b_w \}$ and $\mathcal{V} \coloneqq \{v \: | \: H_v v \leq b_v \}$ are convex polytopes.

Constraints on the performance variables $z$ and control $u$ are also considered, which we assume to be defined by
\begin{equation} \label{eq:constraints}
z \in \mathcal{Z}, \:\: u \in \mathcal{U}, 
\end{equation}
where $\mathcal{Z} \coloneqq \{z\:|\:H_z z \leq b_z \}$ with $H_z \in \mathbb{R}^{n_z \times o}$ and $\mathcal{U} \coloneqq \{u\:|\:H_u u \leq b_u \}$ with $H_u \in \mathbb{R}^{n_u \times m}$ are also convex polytopes. The following assumption is also made regarding the constraints and disturbances, which will be required for the error bound computations presented later.
\begin{ass} \label{ass:compact}
The sets $\mathcal{Z}$, $\mathcal{U}$, $\mathcal{W}$, and $\mathcal{V}$ are compact.
\end{ass}

\subsection{Reduced Order Model}
The reduced order model is defined by
\begin{equation} \label{eq:rom}
\begin{split}
\bar{x}_{k+1} &= A\bar{x}_k + B\bar{u}_k, \\
\bar{y}_k &= C \bar{x}_k, \quad \bar{z}_k = H \bar{x}_k, \\
\end{split}
\end{equation}
where $\bar{x} \in \mathbb{R}^{n}$ is the nominal reduced order state, $\bar{u} \in \mathbb{R}^m$ is the control input to the reduced order system, and $\bar{y} \in \mathbb{R}^p$ and $\bar{z} \in \mathbb{R}^o$ are the nominal measured and performance outputs of the reduced order model, respectively.

In this work we assume the reduced order model was generated via a projection-based method where an oblique Petrov-Galerkin projection $P = V(W^TV)^{-1}W^T$ is used. With this definition, the high-dimensional state can be projected to the reduced order state by $\bar{x} = (W^TV)^{-1}W^Tx^f$ and then can be approximately reconstructed by $x^f \approx V\bar{x}$, and the reduced order model is defined by $A \coloneqq (W^TV)^{-1}W^TA^fV$, $B \coloneqq (W^TV)^{-1}W^TB^f$, $C \coloneqq C^fV$, $H \coloneqq H^fV$. Common model reduction methods such as balanced truncation and proper orthogonal decomposition (POD) can be defined in this way \cite{Antoulas2005}.  We also make the following assumption which is used for controller design and for error bound computations.
\begin{ass} \label{ass:ctrlobsv}
The pair $(A, B)$ is controllable and the pairs $(A,C)$ and $(A,H)$ are observable.
\end{ass}

\subsection{Reduced Order Model Predictive Control}
To control the full order system described by \eqref{eq:fom}, we consider the formulation of the ROMPC control scheme as presented in \cite{LorenzettiLandryEtAl2019}. This scheme consists of a reduced order state estimator, linear feedback controller, and model predictive controller.

\subsubsection{Reduced Order State Estimator}
The reduced order state estimator is given by
\begin{equation} \label{eq:estimator}
\hat{x}_{k+1} = A\hat{x}_k + Bu_k + L(y_k - C\hat{x}_k),
\end{equation}
where $\hat{x} \in \mathbb{R}^n$ is the reduced order state estimate, $u_k$ and $y_k$ are the control and measurement from the \textit{full order system} \eqref{eq:fom}, and $L$ is the estimator gain matrix.

\subsubsection{Reduced Order Controller}
The reduced order control law that is used to control the full order system is given by
\begin{equation} \label{eq:controller}
u_k = \bar{u}_k + K(\hat{x}_k - \bar{x}_k),
\end{equation}
where $K$ is the controller gain matrix. Additionally, the variables $(\bar{x}_k, \bar{u}_k)$ are the state and control values of a \textit{simulated} ROM with dynamics \eqref{eq:rom} that is controlled via a reduced order optimal control problem. This controller form is commonly used in robust control approaches to track a nominal trajectory and reject disturbances. 

\subsubsection{Reduced Order Optimal Control Problem}
The reduced order optimal control problem (OCP) that is used to control the simulated ROM system is given by
\begin{equation} \label{eq:rompc}
\begin{split}
(\mathbf{\bar{x}^*_k}, \mathbf{\bar{u}^*_k}) = \underset{\mathbf{\bar{x}_k}, \mathbf{\bar{u}_k}}{\text{argmin.}} \:\:& \lVert \bar{x}_{k+N|k} \rVert^2_P + \sum_{j=k}^{k+N-1}\lVert \bar{x}_{j|k}\rVert^2_Q + \lVert \bar{u}_{j|k}\rVert^2_R, \\
\text{subject to} \:\: & \bar{x}_{i+1|k} = A\bar{x}_{i|k} + B\bar{u}_{i|k},  \\
& H\bar{x}_{i|k} \in \bar{\mathcal{Z}}, \:\: \bar{u}_{i|k} \in \bar{\mathcal{U}}, \\
& \bar{x}_{k+N|k} \in \bar{\mathcal{X}}_f,\: \bar{x}_{k|k} = \bar{x}_k,
\end{split}
\end{equation}
where $i=k,\dots,k+N-1$, the initial condition constraint is defined by the current state of the simulated ROM $\bar{x}_k$, and the solution $(\mathbf{\bar{x}^*_k}, \mathbf{\bar{u}^*_k}) = (\{\bar{x}^*_i \}_{i=k}^{k+N}, \{\bar{u}^*_i \}_{i=k}^{k+N-1})$ defines an optimal future trajectory for the simulated ROM over the finite horizon $N$. In this optimization problem the positive definite matrices $Q$ and $R$ define the stage cost, and the positive definite matrix $P$ defines the terminal cost. The constraint sets $\bar{\mathcal{Z}}$ and $\bar{\mathcal{U}}$ are tightened versions of the original constraints \eqref{eq:constraints} such that $\bar{\mathcal{Z}} \subseteq \mathcal{Z}$ and $\bar{\mathcal{U}} \subseteq \mathcal{U}$ and the set $\bar{\mathcal{X}}_f$ defines a terminal state constraint. The simulated ROM state $\bar{x}_{k+1}$ is then given by \eqref{eq:rom} with input $\bar{u}_k = \bar{u}^*_k$.

The tightened constraints $\bar{\mathcal{Z}}$ and $\bar{\mathcal{U}}$ are used to ensure robust constraint satisfaction by leveraging the computed error bounds, and are defined in Section \ref{sec:constr_sat}. We assume that the terminal cost matrix $P$ and the terminal set $\bar{\mathcal{X}}_f$ are chosen such that the reduced order OCP \eqref{eq:rompc} defined for the simulated ROM \eqref{eq:rom} is recursively feasible and exponentially stable such that $\bar{x}_k \rightarrow 0$ and $\bar{u}_k \rightarrow 0$. Procedures for designing $P$ and $\bar{\mathcal{X}}_f$ that guarantee this are described in \cite{RawlingsMayneEtAl2017} or \cite{LorenzettiLandryEtAl2019}.

\subsubsection{ROMPC Algorithm} \label{subsubsec:rompc}
To summarize, the ROMPC scheme is defined by the  state estimator \eqref{eq:estimator}, control law \eqref{eq:controller}, and simulated ROM \eqref{eq:rom} controlled by the reduced order OCP \eqref{eq:rompc}. However, before this control scheme can be applied to control the full order system, both the simulated ROM $\bar{x}$ and the state estimator $\hat{x}$ need to be initialized. The simplest approach would be to initialize them at time $k=0$ to be $\hat{x}_0 = \bar{x}_0 = 0$ and then use a separate ``startup'' controller to give the values time to converge. In particular suppose the ROMPC controller took over at time $k_0$, then for $k \in [0, k_0-1]$ the state estimator would be updated according to \eqref{eq:estimator} and the simulated ROM would be updated according to \eqref{eq:rom} with $\bar{u}_k = u_k - K(\hat{x}_k - \bar{x}_k)$.

\section{Robust Constraint Satisfaction} \label{sec:constr_sat}
By design of the ROMPC control scheme, if the reduced order OCP \eqref{eq:rompc} is feasible at time $k_0$ then the simulated ROM satisfies the constraints \eqref{eq:constraints} such that $\bar{z}_k \in \mathcal{Z}$ and $\bar{u}_k \in \mathcal{U}$ for all $k \geq k_0$.
However this is not sufficient to guarantee that the full order system \eqref{eq:fom} satisfies $z_k \in \mathcal{Z}$ and $u_k \in \mathcal{U}$ for all $k \geq k_0$ due to disturbances, estimation error, and model reduction error.

Consider the state reduction error $e \coloneqq x^f - V\bar{x}$ and control error $d = \hat{x} - \bar{x}$. By definition of the constraint sets \eqref{eq:constraints} it can be seen that
\begin{equation*}
\begin{split}
z \in \mathcal{Z} &\iff H_z(\bar{z} + H^f e)  \leq b_z, \\
u \in \mathcal{U} &\iff H_u(\bar{u} + Kd)  \leq b_u.
\end{split}
\end{equation*}
Thus we choose to define the tightened constraint sets $\bar{\mathcal{Z}}$, $\bar{\mathcal{U}}$ used in the reduced order OCP as
\begin{equation} \label{eq:tightconstraints}
\begin{split}
\bar{\mathcal{Z}} \coloneqq \{ \bar{z} \:|\: H_z \bar{z} \leq b_z - \Delta_z \}, \\
\bar{\mathcal{U}} \coloneqq \{ \bar{u} \:|\: H_u \bar{u} \leq b_u - \Delta_u \}, \\
\end{split}
\end{equation}
where $\Delta_z \in \mathbb{R}^{n_z}$ and $\Delta_u \in \mathbb{R}^{n_u}$ are bounds on the errors $\delta_z \coloneqq z - \bar{z} = H^fe$ and $\delta_u \coloneqq  u - \bar{u} = Kd$ such that $H_z \delta_{z,k} \leq \Delta_z$ and $H_u \delta_{u,k} \leq \Delta_u$ for all $k\geq k_0$. In Section \ref{sec:bounds} we describe a procedure for computing $\Delta_z$ and $\Delta_u$ such that these conditions hold.
\begin{lem}[Robust Constraint Satisfaction] \label{lem:robconst}
Suppose that at time $k_0$ the reduced order OCP \eqref{eq:rompc} is feasible and that $H_z\delta_{z,k} \leq \Delta_z$ and $H_u\delta_{u,k} \leq \Delta_u$ for all $k\geq k_0$. Then, under the proposed control scheme the full order system will robustly satisfy the constraints \eqref{eq:constraints} for all $k\geq k_0$.
\end{lem}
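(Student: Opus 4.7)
The plan is to chain three ingredients: (i) recursive feasibility of the reduced order OCP guarantees that the simulated ROM trajectory $\{(\bar{x}_k,\bar{u}_k)\}_{k\geq k_0}$ satisfies the \emph{tightened} constraints $\bar{\mathcal{Z}}$ and $\bar{\mathcal{U}}$, (ii) the algebraic decompositions $z_k = \bar{z}_k + \delta_{z,k}$ and $u_k = \bar{u}_k + \delta_{u,k}$ that follow directly from the definitions $\delta_z = H^f e$ and $\delta_u = Kd$ given just before the lemma statement, and (iii) the hypothesized componentwise bounds $H_z\delta_{z,k}\leq \Delta_z$ and $H_u\delta_{u,k}\leq \Delta_u$.

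First I would invoke the assumption in the preceding paragraph that the terminal ingredients $P$ and $\bar{\mathcal{X}}_f$ are designed so that \eqref{eq:rompc} is recursively feasible whenever it is feasible once. Since we assume feasibility at $k_0$, this delivers $\bar{z}_k = H\bar{x}_k \in \bar{\mathcal{Z}}$ and $\bar{u}_k = \bar{u}^*_k \in \bar{\mathcal{U}}$ for every $k\geq k_0$, which by the definitions in \eqref{eq:tightconstraints} is equivalent to the polytopic inequalities $H_z \bar{z}_k \leq b_z - \Delta_z$ and $H_u \bar{u}_k \leq b_u - \Delta_u$.

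Second, I would add the error bounds to these inequalities. Using the full-order output equation $z_k = H^f x^f_k = H\bar{x}_k + H^f(x^f_k - V\bar{x}_k) = \bar{z}_k + \delta_{z,k}$ and similarly $u_k = \bar{u}_k + K(\hat{x}_k-\bar{x}_k) = \bar{u}_k + \delta_{u,k}$ from the controller \eqref{eq:controller}, a direct calculation gives
\begin{equation*}
H_z z_k = H_z \bar{z}_k + H_z \delta_{z,k} \leq (b_z - \Delta_z) + \Delta_z = b_z,
\end{equation*}
and analogously $H_u u_k \leq b_u$, which are precisely the conditions defining $z_k \in \mathcal{Z}$ and $u_k \in \mathcal{U}$.

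There is essentially no hard step here, since the tightened sets in \eqref{eq:tightconstraints} were constructed specifically to make this telescoping work; the proof is a one-line verification once recursive feasibility is invoked. The only item requiring a pointer rather than a calculation is recursive feasibility of \eqref{eq:rompc} itself, which the paper defers to the cited references \cite{RawlingsMayneEtAl2017, LorenzettiLandryEtAl2019} and I would simply cite in the same way. The genuinely substantive work — producing $\Delta_z$ and $\Delta_u$ that actually bound $H_z\delta_{z,k}$ and $H_u\delta_{u,k}$ under the joint dynamics of the FOM, estimator, and simulated ROM — is postponed to Section~\ref{sec:bounds} and is not part of this lemma.
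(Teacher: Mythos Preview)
Your proposal is correct and is precisely the argument the paper has in mind; in fact the paper does not write out a proof of Lemma~\ref{lem:robconst} at all, treating it as immediate from the construction in Section~\ref{sec:constr_sat} (the equivalences $z\in\mathcal{Z}\iff H_z(\bar z + H^f e)\le b_z$, $u\in\mathcal{U}\iff H_u(\bar u + Kd)\le b_u$ together with the definition~\eqref{eq:tightconstraints} of $\bar{\mathcal{Z}},\bar{\mathcal{U}}$). Your write-up simply makes explicit the one-line telescoping $H_z z_k = H_z\bar z_k + H_z\delta_{z,k}\le (b_z-\Delta_z)+\Delta_z$ and the appeal to recursive feasibility, which is exactly the intended reasoning.
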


\section{Error Bounds} \label{sec:bounds}
In this section we define the error dynamics and propose a novel way for bounding their effect on constraint violations in the control problem.
To gain computational efficiency over the linear programming methods discussed earlier while not introducing additional conservatism we use a combination of linear programming and scalar norm bounds. 

\subsection{Error Dynamics}
Once again considering the state reduction error $e$ and the control error $d$, the joint error state $\epsilon$ is defined by $\epsilon \coloneqq [e^T,\: d^T]^T$ such that the error dynamics are given by
\begin{equation} \label{eq:edynamics}
\epsilon_{k+1} = A_\epsilon \epsilon_k + B_\epsilon r_k + G_\epsilon \omega_k,
\end{equation}
where
\begin{equation*}
A_\epsilon = \begin{bmatrix}
A^f & B^fK \\ LC^f & A + BK - LC
\end{bmatrix},
\end{equation*}
and
\begin{equation*}
B_\epsilon = \begin{bmatrix}
P_\perp A^fV & P_\perp B^f \\ 0 & 0
\end{bmatrix}, \quad G_\epsilon = \begin{bmatrix}
B^f_w & 0 \\ 0 & L
\end{bmatrix},
\end{equation*}
where $P_\perp = I-V(W^TV)^{-1}W^T$, $r = [\bar{x}^T, \: \bar{u}^T]^T$, and $\omega = [w^T,\:v^T]^T$. Note that we can also write these dynamics for any arbitrary times $k_1$ and $k_2$ as
\begin{equation} \label{eq:error_response}
\epsilon_{k_2} = A_\epsilon^{k_2 - k_1} \epsilon_{k_1} + \sum_{j=k_1}^{k_2-1}A_\epsilon^{k_2-1-j}\Big( B_\epsilon r_j + G_\epsilon \omega_j\Big).
\end{equation}

\subsection{Preliminary Computations}
We begin by presenting some preliminary results that will be used later to define the bounds $\Delta_z$ and $\Delta_u$. First, we compute a bounding set $\bar{\mathcal{X}}$ on the reduced order states $\bar{x}$ that is induced by the constraints $\bar{z} \in \mathcal{Z}$, which is useful to bound the input to the error system \eqref{eq:edynamics}. Second, we compute explicit bounds $C_r$ and $C_\omega$ on the weighted norm of the error system inputs, such that $\lVert B_\epsilon r \rVert_G \leq C_r$  and $\lVert G_\epsilon \omega \rVert_G \leq C_\omega$ where $G$ is a positive definite weighting matrix. Finally, we compute parameters $M$ and $\gamma$ that define a bound on the weighted norm of the matrix powers $\lVert A_\epsilon^i \rVert_G \leq M \gamma^i$, used for describing the natural decay of the error system.

\subsubsection{Computing \texorpdfstring{$\bar{\mathcal{X}}$}{Xbar}}
By Assumption \ref{ass:ctrlobsv}, specifically the observability of the pair $(A,H)$, we see that \textit{over time} the constraints $\bar{z}_k \in \mathcal{Z}$ will restrict the admissible reduced order states $\bar{x}_k$. We formalize this notion by computing a set $\bar{\mathcal{X}}$ that bounds the possible realizations of $\bar{x}$.

Specifically, we define $\bar{\mathcal{X}} \coloneqq \{ \bar{x} \:|\: H_{\bar{x}} \bar{x} \leq b_{\bar{x}} \}$ where for simplicity the rows of $H_{\bar{x}}$ are chosen using the standard basis vectors such that $\bar{\mathcal{X}}$ is a hyper-rectangle. Then each element $l$ of $b_{\bar{x}}$ is computed by the linear program
\begin{equation} \label{eq:Xbar}
\begin{split} 
b_{\bar{x},l} = \underset{\bar{x}, \bar{u}}{\text{maximize}} \quad &h_{\bar{x},l}^T \bar{x}_0, \\
\text{subject to} \quad &\bar{x}_{i+1} = A\bar{x}_i + B\bar{u}_i, \\
&\bar{u}_i \in \mathcal{U}, \quad i \in [0,\dots,\bar{i}-1] \\
&H\bar{x}_i \in \mathcal{Z}, \quad i \in [0,\dots,\bar{i}]
\end{split}
\end{equation}
where $h_{\bar{x},l}^T$ is the $l$-th row of $H_{\bar{x}}$ and $\bar{i}\geq n-1$. Note that increasing $\bar{i}$ only adds new constraints and therefore will never cause $b_{\bar{x},l}$ to increase, but can make it decrease. 

From this definition of $\bar{\mathcal{X}}$ we have the following result
\begin{prop} \label{prop:Xbar}
Suppose Assumptions \ref{ass:compact} and \ref{ass:ctrlobsv} hold, and $\bar{u}_k \in \mathcal{U}$ and $H\bar{x}_k \in \mathcal{Z}$ for all $k \geq \underline{k}$. Then, the set $\bar{\mathcal{X}}$ is compact and $\bar{x}_k \in \bar{\mathcal{X}}$ for all $k \geq \underline{k}$.
\end{prop}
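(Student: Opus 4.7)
The plan is to prove the two claims separately: compactness of $\bar{\mathcal{X}}$, and the membership $\bar{x}_k \in \bar{\mathcal{X}}$ for all $k \geq \underline{k}$. Since $\bar{\mathcal{X}}$ is defined as an intersection of half-spaces whose normals are standard basis vectors, it is automatically closed and its compactness reduces to showing that each scalar $b_{\bar{x},l}$ is finite, i.e., that the linear program \eqref{eq:Xbar} is bounded. The membership claim then follows directly from LP optimality: any admissible suffix of the actual closed-loop trajectory starting at $\bar{x}_k$ is a feasible point of \eqref{eq:Xbar}, so the objective at that point is no larger than the LP's optimum.

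For boundedness of the LP, I would unroll the ROM dynamics \eqref{eq:rom} to write
\[
H \bar{x}_i = H A^i \bar{x}_0 + \sum_{j=0}^{i-1} H A^{i-1-j} B \bar{u}_j,
\]
and stack these for $i = 0, \ldots, \bar{i}$ to obtain a relation of the form $\mathcal{O}_{\bar{i}} \bar{x}_0 = \mathbf{z} - \mathcal{T}_{\bar{i}} \bar{\mathbf{u}}$, where $\mathcal{O}_{\bar{i}}$ is the extended observability matrix of $(A,H)$, and $\mathbf{z}$, $\bar{\mathbf{u}}$ collect the $H\bar{x}_i$'s and $\bar{u}_j$'s respectively. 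Because $\bar{i} \geq n-1$, Assumption \ref{ass:ctrlobsv} (observability of $(A, H)$) ensures that $\mathcal{O}_{\bar{i}}$ has full column rank, so $\bar{x}_0$ is uniquely determined by $(\mathbf{z}, \bar{\mathbf{u}})$ through a left inverse. Assumption \ref{ass:compact} then bounds $\mathbf{z}$ and $\bar{\mathbf{u}}$ (each lies in a Cartesian power of a compact polytope), so the feasible set projected onto $\bar{x}_0$ is bounded and every linear objective $h_{\bar{x},l}^T \bar{x}_0$ attains a finite maximum. Feasibility of the LP (ruling out $b_{\bar{x},l} = -\infty$) follows from the hypothesis itself, since the suffix $(\bar{x}_{\underline{k}}, \bar{u}_{\underline{k}}, \bar{x}_{\underline{k}+1}, \ldots)$ already satisfies all the LP constraints.

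For the membership claim I would fix $k \geq \underline{k}$ and consider the shifted trajectory defined by $\bar{x}'_i := \bar{x}_{k+i}$ and $\bar{u}'_i := \bar{u}_{k+i}$ for $i = 0, \ldots, \bar{i}$. By hypothesis $\bar{u}_{k+i} \in \mathcal{U}$ and $H \bar{x}_{k+i} \in \mathcal{Z}$ for every such $i$, and the ROM dynamics are satisfied by construction, so this shifted trajectory is feasible for \eqref{eq:Xbar} with initial condition $\bar{x}'_0 = \bar{x}_k$. Consequently the LP optimum upper-bounds the objective at this feasible point: $h_{\bar{x},l}^T \bar{x}_k \leq b_{\bar{x},l}$ for every $l$. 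Stacking these row-wise gives $H_{\bar{x}} \bar{x}_k \leq b_{\bar{x}}$, i.e., $\bar{x}_k \in \bar{\mathcal{X}}$.

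The only non-routine step is the boundedness argument, because without invoking observability the constraints $H\bar{x}_i \in \mathcal{Z}$ leave the unobservable subspace of $(A, H)$ uncontrolled, and $\bar{x}_0$ could drift to infinity in that subspace without any $H\bar{x}_i$ leaving the compact set $\mathcal{Z}$. The key structural fact is that $\bar{i} \geq n-1$ combined with Assumption \ref{ass:ctrlobsv} kills this unobservable subspace; verifying that this yields a bounded feasible projection (together with compactness of $\mathcal{U}$ and $\mathcal{Z}$) is the main technical content, and everything else reduces to evaluating the LP objective at a known feasible trajectory.
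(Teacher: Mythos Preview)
Your proposal is correct and follows essentially the same approach as the paper's proof: both unroll the ROM dynamics, invoke observability of $(A,H)$ together with $\bar{i}\geq n-1$ to make the (extended) observability matrix full column rank, use compactness of $\mathcal{Z}$ and $\mathcal{U}$ to bound the forced-response terms and hence $\bar{x}_0$, and then obtain membership by noting that the actual shifted trajectory is feasible for the LP \eqref{eq:Xbar}. Your write-up is slightly more explicit (stacking into $\mathcal{O}_{\bar{i}}$ and $\mathcal{T}_{\bar{i}}$, and separately noting LP feasibility), but the argument is the same.
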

\begin{proof}
We begin by showing $\bar{\mathcal{X}}$ is compact. From Assumption \ref{ass:ctrlobsv}, the matrix $\mathcal{O} \coloneqq \begin{bmatrix} H^T & (HA)^T & \dots & (HA^{n-1})^T \end{bmatrix}^T$ is full rank. Since $\bar{x}_i = A^i \bar{x}_0 + \delta_i$ where $\delta_i = \sum_{j=0}^{i-1}A^{i-1-j}B\bar{u}_j$, the constraints $H\bar{x}_i \in \mathcal{Z}$ in \eqref{eq:Xbar} can be written as $HA^i\bar{x}_0 + H\delta_i \in \mathcal{Z}$. 
Therefore, by Assumption \ref{ass:compact} and the constraints in \eqref{eq:Xbar} (with $\bar{i} \geq n-1$), the terms $\delta_i$ are bounded, which implies the vector $\mathcal{O}\bar{x}_0$ is bounded, and since $\mathcal{O}$ is full rank $\bar{x}_0$ is bounded as well. Thus each element $b_{\bar{x},l}$ is bounded and by choice of $H_{\bar{x}}$ the set $\bar{\mathcal{X}}$ is compact. To prove that $\bar{x}_k \in \bar{\mathcal{X}}$ for all $k \geq \underline{k}$ we simply note that by the proposition assumptions the constraints of \eqref{eq:Xbar} are satisfied for all $k \geq \underline{k}$ and therefore $b_{\bar{x}}$ is a valid upper bound on $H_{\bar{x}}\bar{x}$.
\end{proof}

\subsubsection{Computing \texorpdfstring{$(C_r, C_\omega)$}{(Cr,Cw)}}
To compute bounds $C_r$ and $C_\omega$ on the weighted norm of the inputs to the error system \eqref{eq:edynamics}, consider the following optimization problems:
\begin{equation} \label{eq:CrandCw}
\begin{split}
C_r &= \underset{\bar{x}\in \bar{\mathcal{X}}, \bar{u} \in \mathcal{U}}{\text{maximize}} \quad \lVert B_\epsilon [\bar{x}^T, \bar{u}^T]^T \rVert_G, \\
C_\omega &= \underset{w\in \mathcal{W}, v \in \mathcal{V}}{\text{maximize}} \quad \lVert G_\epsilon [w^T, v^T]^T \rVert_G,
\end{split}
\end{equation}
where $G$ is a positive definite weighting matrix. From these definitions we have the following result
\begin{prop} \label{prop:CrandCw}
Suppose Assumptions \ref{ass:compact} and \ref{ass:ctrlobsv} hold, and $\bar{u}_k \in \mathcal{U}$ and $H\bar{x}_k \in \mathcal{Z}$ for all $k \geq \underline{k}$. Then, $\lVert B_\epsilon r_k \rVert_G \leq C_r$ and $\lVert G_\epsilon \omega_k \rVert_G \leq C_\omega$ for all $k \geq \underline{k}$, and $C_r$ and $C_\omega$ are finite.
\end{prop}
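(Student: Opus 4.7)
The plan is to reduce the proposition to the definitions in \eqref{eq:CrandCw} combined with Proposition \ref{prop:Xbar} and an elementary compactness argument. The first bound $\lVert B_\epsilon r_k \rVert_G \leq C_r$ should follow because, under the stated hypotheses, Proposition \ref{prop:Xbar} gives $\bar{x}_k \in \bar{\mathcal{X}}$ for all $k \geq \underline{k}$; together with the hypothesis $\bar{u}_k \in \mathcal{U}$, the pair $(\bar{x}_k, \bar{u}_k)$ is feasible for the maximization problem defining $C_r$, so $\lVert B_\epsilon r_k \rVert_G$ cannot exceed the maximum value. The second bound $\lVert G_\epsilon \omega_k \rVert_G \leq C_\omega$ is even more direct, since \eqref{eq:noise} guarantees $w_k \in \mathcal{W}$ and $v_k \in \mathcal{V}$, making $(w_k, v_k)$ feasible for the defining problem of $C_\omega$.

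For the finiteness claim I would appeal to the Weierstrass extreme value theorem. The feasible set $\bar{\mathcal{X}} \times \mathcal{U}$ is compact by Proposition \ref{prop:Xbar} and Assumption \ref{ass:compact}, and $\mathcal{W} \times \mathcal{V}$ is compact by Assumption \ref{ass:compact}. In both problems the objective is the composition of a linear map with the weighted norm $\lVert \cdot \rVert_G$ (well-defined since $G$ is positive definite), hence continuous. A continuous function on a nonempty compact set attains its maximum, so $C_r$ and $C_\omega$ are finite.

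There is essentially no hard step here; this proposition is a direct corollary of Proposition \ref{prop:Xbar} and the compactness postulated in Assumption \ref{ass:compact}. The only mildly subtle point is to record continuity of the objectives and nonemptiness of the feasible sets so that Weierstrass applies, both of which are immediate from the hypotheses already in force.
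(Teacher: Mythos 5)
Your proposal is correct and follows essentially the same route as the paper: feasibility of $(\bar{x}_k,\bar{u}_k)$ and $(w_k,v_k)$ in the maximization problems \eqref{eq:CrandCw} gives the bounds, and compactness of $\bar{\mathcal{X}}\times\mathcal{U}$ (via Proposition \ref{prop:Xbar}) and $\mathcal{W}\times\mathcal{V}$ (via Assumption \ref{ass:compact}) gives finiteness. The only difference is that you make the Weierstrass step explicit where the paper leaves continuity of the objective implicit.
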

\begin{proof}
By Proposition \ref{prop:Xbar} and Assumption \ref{ass:compact} the sets $\bar{\mathcal{X}}$, $\mathcal{U}$, $\mathcal{V}$, and $\mathcal{W}$ are compact, which guarantees $C_r$ and $C_\omega$ are finite. Additionally, by the bounded disturbance assumption it is guaranteed that $\omega_k \in \mathcal{W}\times \mathcal{V}$ for all $k$ and therefore $\lVert G_\epsilon \omega_k \rVert_G \leq C_\omega$ for all $k$. Similarly, by the proposition assumptions $r_k \in \bar{\mathcal{X}}\times \mathcal{U}$ for all $k\geq \underline{k}$ such that $\lVert B_\epsilon r_k \rVert_G \leq C_r$ for all $k \geq \underline{k}$.
\end{proof}
In general the optimization problems \eqref{eq:CrandCw} are NP-complete and therefore challenging to solve. However as the constraints are assumed to be convex polytopes the optimum lies at one of the vertices, and therefore vertex enumeration is one possible solution approach. If vertex enumeration is too challenging there also exist methods for efficiently computing upper bounds on the optimum values \cite{MangasarianShiau1986}, but these may be conservative.

\subsubsection{Computing \texorpdfstring{$(M, \gamma)$}{(M,gamma)}} \label{subsubsec:Mg}
We now propose two methods for computing parameters $M$ and $\gamma$ such that $\lVert A_\epsilon^i \rVert_G \leq M \gamma^i$ for any non-negative integer $i$. First, since $G$ is positive definite we note that by the sub-multiplicative property of the induced matrix norm that $\lVert A_\epsilon^i \rVert_G \leq \lVert A_\epsilon \rVert_G^i$. Therefore if $\gamma \coloneqq \lVert A_\epsilon \rVert_G$ and $M\coloneqq 1$ we have that $\lVert A_\epsilon^i \rVert_G \leq M\gamma^i$.

A second method, assuming $A_\epsilon$ is diagonalizable, is to first compute its eigenvalue decomposition, $A_\epsilon = T D T^{-1}$ where $D$ is diagonal. Noting that $A_\epsilon^i = TD^iT^{-1}$ define $\gamma \coloneqq  \max_j \lvert \lambda_j(A_\epsilon)\rvert$ where $\lambda_j(\cdot)$ denotes the $j$-th eigenvalue of the matrix (i.e. the $j$-th diagonal element of $D$). We then have $\lVert A_\epsilon^i \rVert_G \leq  M\gamma^i$ where $M \coloneqq \lVert G^{1/2} T \rVert_2 \lVert T^{-1}G^{-1/2} \rVert_2$ by using the sub-multiplicative property of the induced matrix norm and the fact that $\lVert X \rVert_G = \lVert G^{1/2} XG^{-1/2} \rVert_2$ for the matrix $X$.

In order for these bounds to be meaningful for analyzing the error system we require that $\gamma < 1$. This requirement will obviously not be satisfied if the error dynamics are unstable and so we make the following assumption.
\begin{ass} \label{ass:Aestable}
The gains $K$ and $L$ are chosen such that $A_\epsilon$ is Schur stable.
\end{ass}
Under Assumption \ref{ass:Aestable} we can see that using the eigenvalue decomposition method is guaranteed to give $\gamma < 1$. On the other hand the first method, where $\gamma \coloneqq \lVert A_\epsilon \rVert_G$, is not guaranteed to yield $\gamma < 1$ without the proper choice of the weighting matrix $G$. A more thorough discussion on the merits of each method, along with techniques for determining $G$, are provided in Section \ref{sec:G}.

\subsection{Error Bounds}
We now present our approach for computing the bounds $\Delta_z$ and $\Delta_u$ used to tighten the constraints in \eqref{eq:tightconstraints}, where the primary goal is to reduce conservatism as much as possible while retaining computational efficiency. To accomplish this we leverage the description of the error dynamics given by \eqref{eq:error_response} and the fact that recent disturbances have more influence on the error than those longer ago. Specifically we split the convolution into two parts and treat the recent terms with more precision by formulating a sequence of linear programs. We begin with the following assumption:
\begin{ass} \label{ass:inf_hor_lp}
The simulated ROM satisfies $\bar{u}_k \in \mathcal{U}$ and $H\bar{x}_k\in\mathcal{Z}$ for all $k \geq \underline{k}$ and $\lVert \epsilon_{\underline{k}} \rVert_G \leq \eta_{\underline{k}}$ where $\underline{k} = k_0 - 2\tau$.
\end{ass}
In Assumption \ref{ass:inf_hor_lp}, $k_0$ is used to represent the time at which the ROMPC scheme takes control of the system, and $\tau$ is a user-defined time horizon parameter. This assumption is therefore an assumption on the behavior of the simulated ROM for a period of time before the ROMPC controller takes over (as discussed in Section \ref{subsubsec:rompc}). It is worth noting that the assumptions on $\bar{u}$ and $\bar{x}$ are easily verified in practice, and high confidence on the satisfaction of the assumption on $\lVert \epsilon_{\underline{k}} \rVert_G$ is possible by choosing $\eta_{\underline{k}}$ conservatively based on domain knowledge of the possible values of $x^f$. In fact, as will be seen in the following sections, the value of $\eta_{\underline{k}}$ can be chosen extremely conservatively without having a major impact on the overall conservativeness of the error bounds.

Using \eqref{eq:error_response} we now divide the error $\epsilon_k$ into two components $\epsilon_{k} = \epsilon_k^{(1)} + \epsilon_k^{(2)}$ where
\begin{equation*}
\begin{split}
\epsilon_k^{(1)} &= A_\epsilon^{k-\underline{k}} \epsilon_{\underline{k}} + \sum_{j=\underline{k}}^{k-\tau-1}A_\epsilon^{k-1-j} \Big(B_\epsilon r_j + G_\epsilon \omega_j \Big), \\
\epsilon_k^{(2)} &= \sum_{j=k - \tau}^{k-1}A_\epsilon^{k-1-j} \Big(B_\epsilon r_j + G_\epsilon \omega_j \Big).
\end{split}
\end{equation*}
The term $\epsilon_k^{(2)}$ represents the contribution from the $\tau$ most recent inputs, and the term $\epsilon_k^{(1)}$ represents the contribution from everything prior. We now study these terms individually, beginning with an analysis of the norm  $\lVert \epsilon_k^{(1)}\rVert_G$.

\subsubsection{Bounding \texorpdfstring{$\epsilon_k^{(1)}$}{(e1)}}
It is important to note that since $\epsilon_k^{(1)}$ represents contributions from \textit{older} inputs to the error system, their influence on $\epsilon_k$ is inherently less significant. Therefore, as previously mentioned, an analysis of the norm of $\epsilon_k^{(1)}$ can be used for computational efficiency without adding significant conservativeness to the total error bound.

Through the triangle inequality and by definition of the induced matrix norms we have
\begin{equation*}
\begin{split}
\lVert \epsilon_k^{(1)} \rVert_G \leq& \lVert A_\epsilon^{k-\underline{k}}\rVert_G \lVert \epsilon_{\underline{k}}\rVert_G + \sum_{j=\underline{k}}^{k-\tau-1}\lVert A_\epsilon^{k-1-j}\rVert_G \lVert B_\epsilon r_j\rVert_G \\
& +\sum_{j=\underline{k}}^{k-\tau-1}\lVert A_\epsilon^{k-1-j}\rVert_G \lVert G_\epsilon \omega_j\rVert_G. \\
\end{split}
\end{equation*}
Utilizing the previously computed bounds defined by $M$, $\gamma$, $C_r$ and $C_\omega$ and with the assumed bound on $\lVert\epsilon_{\underline{k}} \rVert_G$ from Assumption \ref{ass:inf_hor_lp} this reduces to
\begin{equation*}
\begin{split}
\lVert \epsilon_k^{(1)} \rVert_G \leq& M\gamma^{k-\underline{k}}\eta_{\underline{k}} +\frac{M \gamma^\tau (C_r + C_\omega)}{1-\gamma},\\
\end{split}
\end{equation*}
where in the last term we also made a substitution from the fact that $\sum_{j=0}^{k-\tau-1} \gamma^{k-1-j} \leq \frac{\gamma^\tau}{1-\gamma}$ since $\gamma < 1$.
Additionally, since $k - \underline{k} \geq 2\tau$ for all $k \geq k_0$ (as defined in Assumption \ref{ass:inf_hor_lp}) it holds that $\gamma^{k-\underline{k}} \leq \gamma^{2\tau}$ for all $k \geq k_0$.
Therefore, we can define a bound $\Delta^{(1)}$ such that $\lVert \epsilon_k^{(1)} \rVert_G \leq \Delta^{(1)}$ for all $k \geq k_0$ where
\begin{equation} \label{eq:inf_hor_lp_e1}
\begin{split}
\Delta^{(1)} \coloneqq M\gamma^{2\tau} \eta_{\underline{k}} +\frac{M\gamma^\tau(C_r + C_\omega) }{1-\gamma}.
\end{split}
\end{equation}
Notice that because $\gamma < 1$ the impact of the assumed bound $\eta_{\underline{k}}$ (defined in Assumption \ref{ass:inf_hor_lp}) on $\Delta^{(1)}$ can be made negligible by choosing $\tau$ to be sufficiently large. This means that $\eta_{\underline{k}}$ can be chosen to be large (i.e. extremely conservative) such that Assumption \ref{ass:inf_hor_lp} is justifiable.

\subsubsection{Bounding \texorpdfstring{$\epsilon_k^{(2)}$}{(e2)}}
We now study the second component of the error, $\epsilon_k^{(2)}$. Here we note that for constraint tightening \eqref{eq:tightconstraints} we are actually interested in bounding $H_z\delta_{z,k}$ and $H_u\delta_{u,k}$ which can be expressed as $E_z\epsilon_k$ and $E_u\epsilon_k$ with
\begin{equation}
E_z = \begin{bmatrix}
H_z H^f & 0
\end{bmatrix}, \quad E_u = \begin{bmatrix}
 0 & H_u K
\end{bmatrix}.
\end{equation}
We therefore develop a method for computing a bound $\Delta^{(2)}(\theta)$ on an arbitrary inner product such that $\theta^T\epsilon^{(2)}_k \leq \Delta^{(2)}(\theta)$. Specifically this is accomplished by solving the linear program
\begin{equation} \label{eq:inf_hor_lp_e2}
\begin{split} 
\Delta^{(2)}(\theta)= \underset{\bar{x}, \bar{u}, \omega}{\text{max.}} \quad &\theta^T \sum_{j=0}^{\tau-1}A_\epsilon^{\tau-j-1}\big(B_\epsilon r_j + G_\epsilon \omega_j\big), \\
\text{subject to} \quad & \bar{x}_{i+1} = A\bar{x}_i + B\bar{u}_i,\\
&r_i \in \bar{\mathcal{X}} \times \mathcal{U}, \:\: r_i = [\bar{x}_i^T, \bar{u}_i^T]^T, \\
&H\bar{x}_i \in \mathcal{Z}, \quad i \in [-\tau, \dots, \tau-1], \\
&\omega_i \in \mathcal{W} \times \mathcal{V}, \quad i \in [0,\dots, \tau-1].
\end{split}
\end{equation}
As can be seen the objective function comes directly from the definition of $\epsilon_k^{(2)}$, and so this linear program computes the worst case value of $\theta^T\epsilon^{(2)}_k$ under the assumption that the values $r_i$ are unknown over the horizon $i \in [k-\tau, \dots ,k-1]$, but that $H\bar{z}_i \in \mathcal{Z}$ and $\bar{u}_i \in \mathcal{U}$ over the horizon $i \in [k-2\tau,\dots,k-1]$. It is therefore readily apparent from Assumption \ref{ass:inf_hor_lp} that this bound is valid for all $k \geq k_0$.

It is important to note that the linear program \eqref{eq:inf_hor_lp_e2} can be solved very efficiently since the decision variables $\bar{x}$ and $\bar{u}$ are the reduced order state and control, and while the objective function contains the matrices $A_\epsilon$, the objective function itself is not high dimensional and the required matrix products can be efficiently computed.

\subsubsection{Defining \texorpdfstring{$\Delta_z$}{(Dz)} and \texorpdfstring{$\Delta_u$}{(Du)}}
Now that we have shown how to compute bounds on $\epsilon_k^{(1)}$ and $\epsilon_k^{(2)}$ we show how they can be combined to give bounds such that $E_z \epsilon_k = H_z \delta_{z,k} \leq \Delta_z$ and $E_u \epsilon_k = H_u \delta_{u,k} \leq \Delta_u$ that can be used in \eqref{eq:tightconstraints} for constraint tightening.
Considering a row $\theta^T$ of either $E_z$ or $E_u$:
\begin{equation*}
\begin{split}
\theta^T \epsilon_k &= \theta^T(\epsilon^{(1)}_k + \epsilon^{(2)}_k)\\
&\leq \lVert \theta^T G^{-1/2} \rVert_2 \lVert \epsilon^{(1)}_k \rVert_G + \theta^T \epsilon^{(2)}_k.\\
\end{split}
\end{equation*}
Therefore, from the previously computed bounds the $i$-th element of vector $\Delta_z$ and the $j$-th element of vector $\Delta_u$ are defined as
\begin{equation} \label{eq:bounds}
\begin{split}
\Delta_{z,i} &\coloneqq \Delta^{(1)}(e_{z,i}) + \Delta^{(2)}(e_{z,i}), \\
\Delta_{u,j} &\coloneqq \Delta^{(1)}(e_{u,j}) + \Delta^{(2)}(e_{u,j}), \\
\end{split}
\end{equation}
where $i \in [1,\dots,n_z]$, $j \in [1,\dots,n_u]$, $e_{z,i}^T$ and $e_{u,j}^T$ are the rows of $E_z$ and $E_u$ respectively, and
\begin{equation}
\begin{split}
\Delta^{(1)}(\theta) = \lVert \theta^T G^{-1/2} \rVert_2 \Delta^{(1)}.
\end{split}
\end{equation}

\begin{thm}[Robust Constraint Satisfaction]
Suppose that at time $k_0$ the reduced order OCP \eqref{eq:rompc} is feasible where the tightened constraints are defined using the bounds \eqref{eq:bounds}, and that Assumptions \ref{ass:compact}-\ref{ass:inf_hor_lp} hold. Then, under the proposed control scheme the full order system will robustly satisfy the constraints \eqref{eq:constraints} for all $k \geq k_0$.
\end{thm}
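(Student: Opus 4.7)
The plan is to invoke Lemma \ref{lem:robconst}: it suffices to verify that $H_z\delta_{z,k}\le\Delta_z$ and $H_u\delta_{u,k}\le\Delta_u$ (componentwise) for all $k\ge k_0$. Since $H_z\delta_{z,k}=E_z\epsilon_k$ and $H_u\delta_{u,k}=E_u\epsilon_k$, this reduces to showing, for every row $\theta^T$ of $E_z$ or $E_u$, the inequality $\theta^T\epsilon_k\le \Delta^{(1)}(\theta)+\Delta^{(2)}(\theta)$, after which the conclusion follows by applying \eqref{eq:bounds} and the definition of the tightened sets \eqref{eq:tightconstraints}.

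First I would check that every preliminary object is valid along the realized trajectory. Assumption \ref{ass:inf_hor_lp} directly yields $\bar{u}_k\in\mathcal{U}$ and $H\bar{x}_k\in\mathcal{Z}$ for every $k\ge\underline{k}=k_0-2\tau$, so Proposition \ref{prop:Xbar} gives $\bar{x}_k\in\bar{\mathcal{X}}$ and Proposition \ref{prop:CrandCw} gives $\lVert B_\epsilon r_k\rVert_G\le C_r$ and $\lVert G_\epsilon\omega_k\rVert_G\le C_\omega$ for all such $k$; disturbances are bounded by \eqref{eq:noise}. Assumption \ref{ass:Aestable} together with the construction of Section \ref{subsubsec:Mg} then produces $M$ and $\gamma<1$ with $\lVert A_\epsilon^i\rVert_G\le M\gamma^i$ for all $i\ge 0$.

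Next, using the decomposition $\epsilon_k=\epsilon_k^{(1)}+\epsilon_k^{(2)}$, I would bound each piece separately. For $\epsilon_k^{(1)}$, the triangle inequality and submultiplicativity of the induced norm, combined with the preceding ingredients, the geometric sum identity $\sum_{j=0}^{k-\tau-1}\gamma^{k-1-j}\le\gamma^\tau/(1-\gamma)$, and the inequality $k-\underline{k}\ge 2\tau$ (valid for $k\ge k_0$), reproduce the calculation leading to $\lVert\epsilon_k^{(1)}\rVert_G\le\Delta^{(1)}$; the weighted Cauchy--Schwarz bound $\theta^T v\le \lVert\theta^T G^{-1/2}\rVert_2\lVert v\rVert_G$ then yields $\theta^T\epsilon_k^{(1)}\le\Delta^{(1)}(\theta)$. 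For $\epsilon_k^{(2)}$, the key observation is that the LP \eqref{eq:inf_hor_lp_e2} maximizes precisely the functional $\theta^T\epsilon_k^{(2)}$ over a feasible set that \emph{contains} the true tail realization: identifying the LP time index $i\in[-\tau,\tau-1]$ with the real-time window $i\in[k-2\tau,k-1]\subseteq[\underline{k},\infty)$, Proposition \ref{prop:Xbar} provides $r_i\in\bar{\mathcal{X}}\times\mathcal{U}$, Assumption \ref{ass:inf_hor_lp} provides $H\bar{x}_i\in\mathcal{Z}$, and \eqref{eq:noise} provides $\omega_i\in\mathcal{W}\times\mathcal{V}$. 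Hence $\theta^T\epsilon_k^{(2)}\le\Delta^{(2)}(\theta)$; summing the two piecewise bounds gives \eqref{eq:bounds} row by row for each constraint in $E_z$ and $E_u$, and Lemma \ref{lem:robconst} closes the argument.

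\textbf{The main obstacle} is the time-index bookkeeping that links \eqref{eq:inf_hor_lp_e2} to the actual error recursion. The LP imposes constraints over $i\in[-\tau,\tau-1]$, which under the shift spans a window of length $2\tau$ in real time; this is precisely why Assumption \ref{ass:inf_hor_lp} fixes $\underline{k}=k_0-2\tau$, as Proposition \ref{prop:Xbar}'s conclusion $\bar{x}_i\in\bar{\mathcal{X}}$ must hold on the ``invisible'' first half $[k-2\tau,k-\tau-1]$ even though the convolution defining $\epsilon_k^{(2)}$ only involves indices $[k-\tau,k-1]$. Verifying that \emph{every} LP constraint is genuinely satisfied by the realized trajectory, not merely on the visible $\tau$-step tail, is the place where the careful choice of $\underline{k}$ must be invoked; the rest of the proof is a careful bookkeeping of the bounds already assembled in Section \ref{sec:bounds}.
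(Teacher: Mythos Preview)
Your proposal is correct and follows exactly the same route as the paper's own proof: decompose $\epsilon_k=\epsilon_k^{(1)}+\epsilon_k^{(2)}$, bound $\lVert\epsilon_k^{(1)}\rVert_G\le\Delta^{(1)}$ and $\theta^T\epsilon_k^{(2)}\le\Delta^{(2)}(\theta)$ using the preliminary results of Section~\ref{sec:bounds}, and then invoke Lemma~\ref{lem:robconst}. The paper's proof is in fact much terser (essentially ``by design\ldots''), so your version, including the explicit time-index bookkeeping linking the LP window to $\underline{k}=k_0-2\tau$, simply spells out details that the paper leaves implicit.
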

\begin{proof}
By design, under Assumptions \ref{ass:compact}-\ref{ass:inf_hor_lp} the bounds $\lVert \epsilon_k^{(1)} \rVert_G \leq \Delta^{(1)}$ and $\theta^T\epsilon_k^{(2)} \leq \Delta^{(2)}(\theta)$ hold for any $\theta$ and for all $k \geq k_0$. Therefore by definition $H_z\delta_{z,k} \leq \Delta_z$ and $H_u\delta_{u,k} \leq \Delta_u$ for all $k \geq k_0$. From Lemma \ref{lem:robconst} we then have the final result.
\end{proof}

\section{Computing Parameters \texorpdfstring{$(M, \gamma, G)$}{(M, gamma, G)}} \label{sec:G}
In Section \ref{subsubsec:Mg} we introduced two methods for computing the parameters $M \geq 1$ and $\gamma < 1$, such that for a Schur stable matrix $P \in \mathbb{R}^{p \times p}$, $\lVert P^i \rVert_G \leq M \gamma^i$ for all integers $i\geq 0$. Here we present more concrete techniques for computing these parameters, including the positive definite matrix $G \in \mathbb{R}^{p \times p}$, and also consider how these choices will impact the bounds defined by \eqref{eq:bounds}.

\subsection{Discrete Lyapunov Method} \label{subsec:lyap}
In this approach the matrix $G$ is computed to ensure $\lVert P \rVert_G < 1$, such that $\gamma \coloneqq \lVert P \rVert_G$ and $M \coloneqq 1$ can be chosen since $\lVert P^i \rVert_G \leq \lVert P \rVert_G^i = \gamma^i$. Intuitively this approach finds a matrix $G$ which corresponds to a transformation under which the system is strictly contracting. This is accomplished by solving the discrete Lyapunov equation
\begin{equation}
P^T G P - \eta^2 G + I = 0,
\end{equation}
for which a unique positive definite solution is guaranteed to exist when $P$ is Schur stable and $\eta \in ( \max_j \lvert \lambda_j(P)\rvert, 1)$ where $\lambda_j(P)$ is the $j$-th eigenvalue of $P$.
Finally, since $\lVert P \rVert_G = \lVert G^{1/2}P G^{-1/2} \rVert_2 = \max_i \sqrt{\lambda_i(\eta^2I - G^{-1})} < \eta$ where $\lambda_i(\cdot)$ denotes the $i$-th eigenvalue of the matrix, we are guaranteed to have $\lVert P \rVert_G < \eta$.


\subsection{Geometric Programming Method} \label{subsec:geoprog}
For this method, assuming the Schur stable matrix $P$ is diagonalizable, we consider the eigenvalue decomposition method presented in Section \ref{subsubsec:Mg} and compute $P = T D T^{-1}$ where $D$ is diagonal. We then choose $\gamma \coloneqq  \max_i \lvert \lambda_i(P)\rvert < 1$ and set $M\coloneqq \lVert G^{1/2} T \rVert_2 \lVert T^{-1}G^{-1/2} \rVert_2$. 
Note that this technique uses the minimum value of $\gamma$ possible, but in general $M\neq 1$ as in the previous methods. The matrix $G$ is then computed to attempt to minimize the multiplicative constant. We therefore propose a general optimization-based definition of $G$ as
\begin{equation} \label{eq:idealGprob}
\begin{split}
G \coloneqq \underset{G > 0}{\text{minimize}} \quad &\prod_l \lVert G^{1/2} X_l \rVert_2 \lVert Y_l G^{-1/2} \rVert_2 , \\
\end{split}    
\end{equation}
where $X_l \in \mathbb{R}^{p \times m_l}$ and $Y_l \in \mathbb{R}^{n_l \times p}$ are problem specific matrices. For example to minimize $M$ you would choose $X = T$ and $Y = T^{-1}$. However this problem is generally intractable to solve and therefore we make some approximations. Specifically, by noting that $\lVert \cdot \rVert_2 \leq \lVert \cdot \rVert_F$, where $ \lVert \cdot \rVert_F$ denotes the Frobenius norm, and by restricting $G$ to be diagonal we can see that
\begin{equation*}
\prod_l \lVert G^{1/2} X_l \rVert_2 \lVert Y_l G^{-1/2} \rVert_2 \leq \prod_l \sqrt{\sum_{i=1}^p g_i a_{l,i}} \sqrt{\sum_{j=1}^p \frac{1}{g_j} b_{l,j}},
\end{equation*}
where $g_i$ is the $i$-th diagonal element of $G$, and $a_{l,i} = \sum_{j=1}^{m_l} (X_l)_{i,j}^2$, and $b_{l,j} = \sum_{i=1}^{n_l} (Y_l)_{i,j}^2$ where $(X_l)_{i,j}$ denotes the $(i,j)$-th element of $X_l$.

We then choose $G$ to minimize this upper bound by formulating the following optimization problem
\begin{equation} \label{eq:geoprog}
\begin{split}
G \coloneqq \underset{g_i > 0, i \in \{1, \dots, p\}}{\text{minimize}} \quad &\prod_l \sum_{i,j=1}^p a_{l,i} b_{l,j} g_i g_j^{-1}. \\
\end{split}    
\end{equation}
By definition $a_{l,i} \geq 0$ and $b_{l,j} \geq 0$ and therefore this problem is a geometric program which can be transformed and solved as a convex optimization problem \cite{BoydVandenberghe2004}. For problems where $p$ is very large this problem can also be solved iteratively using a batch coordinate descent strategy starting with an initial guess for $G$ that is positive definite. For the error bounding method in Section \ref{sec:bounds} we propose the use of $X_1 = T$, $Y_1 = T^{-1}$, $X_2 = \text{diag}(B_\epsilon, \: G_\epsilon)$, and $Y_2 = \text{diag}(E_z, \: E_u)$ to decrease $\Delta^{(1)}$.

\subsection{Comparison of Methods}
For problems where the full order system dimension is low enough that a Lyapunov equation can be solved the Lyapunov method is simple and effective. However for larger problems where the Lyapunov equation cannot be solved efficiently the geometric programming method is a good alternative since the problem can be broken up into smaller parts and solved sequentially.

\section{Discussion} \label{sec:discussion}
The goal of this work is to develop a computationally efficient method for computing error bounds for ROMPC that does not hinder performance by being overly conservative. In this section we discuss the differences between previous ROMPC methods to highlight the advantages of our approach.

\subsection{ROMPC with \textit{A Priori} Error Bounding Methods}
Previous ROMPC schemes that also consider general constraints and use \textit{a priori} error bounds include \cite{KoegelFindeisen2015b}, and \cite{LorenzettiLandryEtAl2019}. In fact these methods are similar to each other as they both compute the error bounds by solving a sequence of linear programs, but through some modifications \cite{LorenzettiLandryEtAl2019} yields less conservative results. The main disadvantage of these methods is that the linear programs include the FOM dynamics as constraints and therefore their complexity scales poorly with FOM size. Additionally, the number of linear programs that need to be solved increases with the dimension of the ROM. Our proposed approach addresses these computational efficiency issues by formulating the linear programs \eqref{eq:inf_hor_lp_e2} with only the ROM dynamics as constraints such that the complexity is greatly reduced. Additionally, the number of linear programs solved is constant with respect to both FOM and ROM dimension.

While conservatism is introduced into our method from the approximations used to compute $\Delta^{(1)}$, by choosing $\tau$ to be large enough the effect of this conservatism will be negligible. Additionally, from the definition of the linear programs used to compute the $\Delta^{(2)}$ terms our approach will have an overall conservatism that is very similar to \cite{LorenzettiLandryEtAl2019}. We demonstrate this for the synthetic system in Section \ref{subsec:synthetic}.

\subsection{ROMPC with \textit{A Posteriori} Error Bounding Methods}
Of the previously proposed ROMPC approaches, the only one that does not utilize \textit{a priori} bounds is \cite{LoehningRebleEtAl2014}. This approach instead incorporates a scalar error bounding system
\begin{equation*}
\Delta_{k+1} = \gamma \Delta_k + M \lVert B_\epsilon r_k \rVert_G.
\end{equation*} 
into the reduced order OCP and obtains stability guarantees by designing the terminal set
\begin{equation*}
\bar{\mathcal{X}}_f \coloneqq \{ (\bar{x}, \:\Delta) \:|\: \bar{x}^TP\bar{x} \leq \gamma_1, \:\: 0 \leq \Delta \leq \gamma_2 \},
\end{equation*}
to be control invariant with respect to the \textit{coupled} ROM/error system dynamics, where $P$, $\gamma_1$, and $\gamma_2$ are computed using the procedure in \cite[Lemma~1]{LoehningRebleEtAl2014}.

Computationally this approach is simpler than our proposed approach and the other \textit{a priori} methods, but suffers from being overly conservative even though a worst-case analysis is not required.
This conservatism is inherent to the use of the scalar bound on the norm of the error, which can be a poor estimate that over-compensates for the actual error dynamics. In contrast, the use of linear programs in our approach leads to much tighter bounds.
We again demonstrate this using the synthetic system in Section \ref{subsec:synthetic}. Note that \cite{LoehningRebleEtAl2014} considers continuous time systems and assumes full state feedback with no disturbances and is therefore challenging to directly compare to our method. Instead we compare against the extension of the method to the discrete time, output feedback setting which is relatively straightforward.

\section{Experiments} \label{sec:experiments}
 
\subsection{Synthetic System} \label{subsec:synthetic}
This system, with dimension $n^f = 6$, is adapted from \cite{HovlandLovaasEtAl2008} and is the same system used in \cite{LorenzettiLandryEtAl2019}. Specifically we define $A^f$ and $B^f$ identically to \cite{HovlandLovaasEtAl2008} and define $C_f = \begin{bmatrix}1.29 & 0.24 & 0_{1 \times 4} \end{bmatrix}$, $H^f = \begin{bmatrix}I_{2 \times 2} & 0_{2 \times 4} \end{bmatrix}$, and $B^f_w = I$. The performance and control constraints are given by $\mathcal{Z} = \{z \:|\: \lVert z \rVert_\infty \leq 50 \}$ and $\mathcal{U} = \{u \:|\: \lVert u \rVert_\infty \leq 20 \}$, and the disturbances are bounded with $\mathcal{W} = \{w \:|\: \lVert w \rVert_\infty \leq 0.05 \}$ and $\mathcal{V} = \{v \:|\: \lVert v \rVert_\infty \leq 0.01 \}$. The reduced order model was computed using balanced truncation with $n=4$. For the control problem we use a time horizon of $N=20$, define the cost with $Q^f = 10I$ and $R^f = I$, choose $Q = V^TQ^fV$ and $R=R^f$, and the controller gains $K$ and $L$ are the linear quadratic regulator gains computed using the ROM dynamics. Finally, the error bounds $\Delta_z$ and $\Delta_u$ were computed with $\tau = 100$ where $G$, $M$, and $\gamma$ were computed using the Lyapunov method from Section \ref{subsec:lyap}, and $C_r$ and $C_\omega$ were computed using vertex enumeration.

In Figure \ref{fig:synthetic_z} we show an example of the closed-loop behavior when the full order system starts from a steady state where $z_2 = 25$, where we also compare the ROMPC scheme to the \textit{full order} robust output feedback MPC controller described in \cite{LorenzettiPavone2020}. It can be seen that under the ROMPC controller the full order system does not approach the constraint boundary as closely as in the full order MPC case (and incurs approximately $6\%$ more cost) due to the error bounds that account for model reduction error.

\begin{figure}[ht]
    \centering
    \includegraphics[width=0.9\columnwidth]{./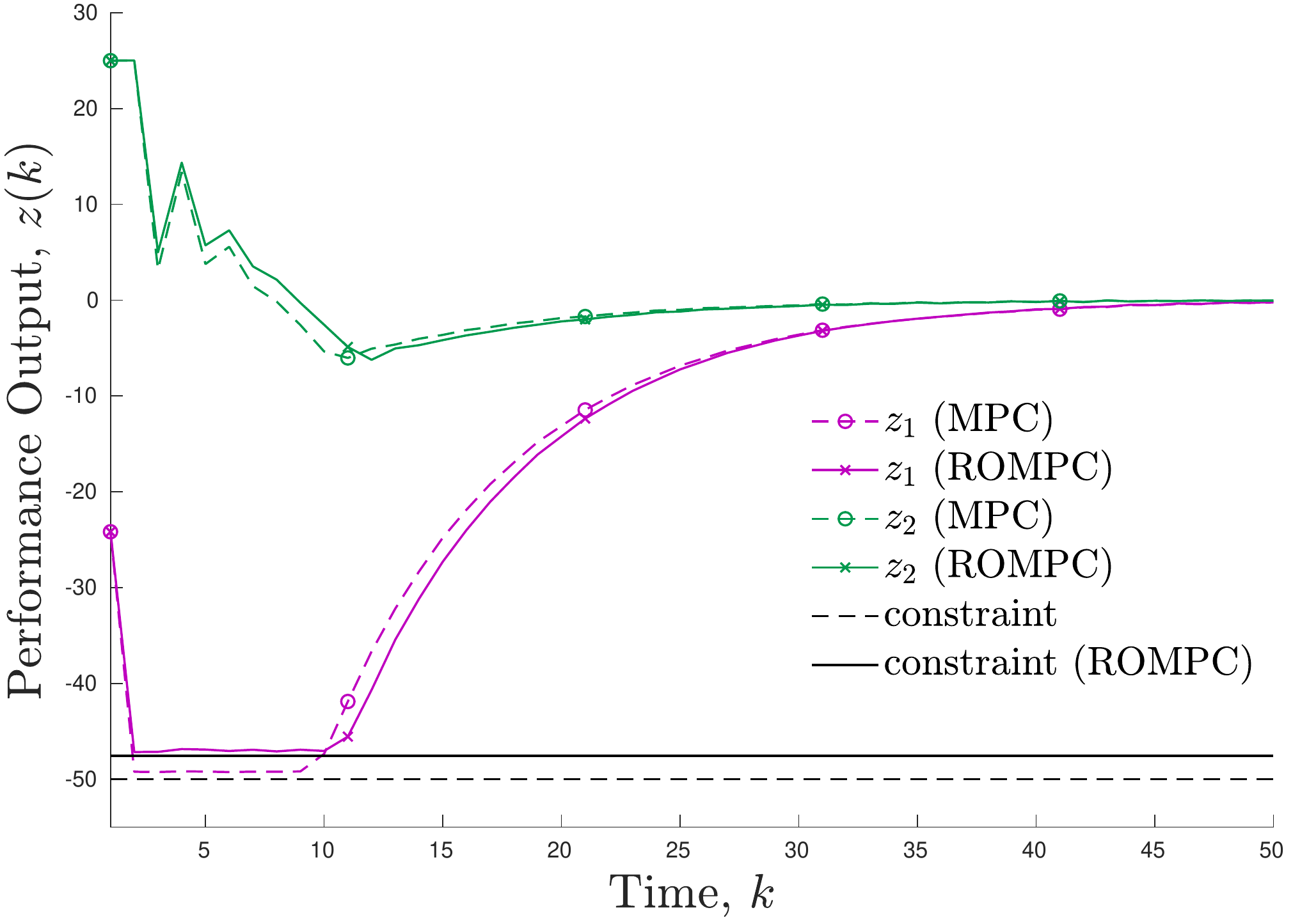}
    \caption{Closed-loop behavior of the synthetic system described in Section \ref{subsec:synthetic} using both a ROMPC controller as well as a full order MPC controller for comparison.}
\label{fig:synthetic_z}
\vspace{-.10in}
\end{figure}

For a comparison to other ROMPC approaches, we first use Algorithm 1 from \cite{LorenzettiLandryEtAl2019} to compute the tightened constraints. In this case the amount of constraint tightening on $z$ changed by less than $1\%$ while the constraints on $u$ were tightened approximately $27\%$ less with our proposed method, showing that in this case our bounds are actually less conservative. 

To compare against the method from \cite{LoehningRebleEtAl2014} we first compute a polytopic outer approximation to the ellipsoidal terminal set $\bar{\mathcal{X}}_f$ used in \cite{LoehningRebleEtAl2014} and note that on average the terminal set from our proposed method is approximately $15$ times larger in each direction. This suggests that the ROMPC scheme from \cite{LoehningRebleEtAl2014} may have a more restricted feasible set for a given horizon $N$. Next we again consider the case where the system starts at steady state corresponding to the variable $z_2$ tracking a setpoint. We find that whenever the setpoint satisfies $\lvert z_2 \rvert \geq 8$, every constraint is tightened more when using \cite{LoehningRebleEtAl2014} than when using our method.
Additionally, when using the method from \cite{LoehningRebleEtAl2014} the ROM initial state corresponding to the $z_2$ setpoint is only feasible with respect to the tightened constraints (i.e. the initial condition is in the ROMPC feasible set) if $\lvert z_2 \rvert \leq 20$. On the other hand with our approach the initial condition is feasible whenever $\lvert z_2 \rvert \leq 27$, which is a direct result of having less conservative constraint tightening. For comparison the initial state is feasible with respect to the \textit{original} constraints for $\lvert z_2 \rvert \leq 28$. For this example these results clearly indicate that our approach is less conservative even though we consider \textit{a priori} error bounds.

\subsection{2D Heat Flow Model} \label{subsec:heatflow}
We also briefly introduce another example to show that our approach can be easily applied to a problem arising from finite element modeling. This system is described by a linear 2D heat flow model adapted from the HF2D9 model in \cite{Leibfritz2006} and discretized in time. This system has $n^f = 3841$ states, $m = 5$ inputs, $p = 5$ measurements, and $o = 4$ performance variables (that represent average temperatures across different small regions) and has an unstable mode. The performance and control constraints are given by $\mathcal{Z} = \{z \:|\: \lVert z \rVert_\infty \leq 2 \}$ and $\mathcal{U} = \{u \:|\: \lVert u \rVert_\infty \leq 100 \}$ and no disturbances are considered. The reduced order model was computed using balanced truncation with $n=21$ after performing a stable/unstable decomposition of the system. For the control problem we use a time horizon of $N=30$ and define the controllers with the linear quadratic regulator gains using the ROM dynamics. The error bounds were computed with $\tau = 2500$ and where $G$, $M$, and $\gamma$ were computed using the geometric programming method from Section \ref{subsec:geoprog} with a batch coordinate descent scheme. The values of $C_r$ and $C_\omega$ were computed via vertex enumeration.

In Figure \ref{fig:heatflow} we show simulation results where the full order system begins at equilibrium at the origin and is then controlled to track a setpoint (using setpoint tracking results from \cite{LorenzettiLandryEtAl2019}). The steady state temperature profile that corresponds to this tracking problem is shown in Figure \ref{fig:tempmesh}. It can be seen that in this particular simulation the control constraints are active at the beginning, where a small gap is present due to the use of tightened constraints resulting from the computed error bounds.

\begin{figure}
    \centering
    \begin{subfigure}[t]{.45\textwidth}
        \includegraphics[width=\textwidth]{./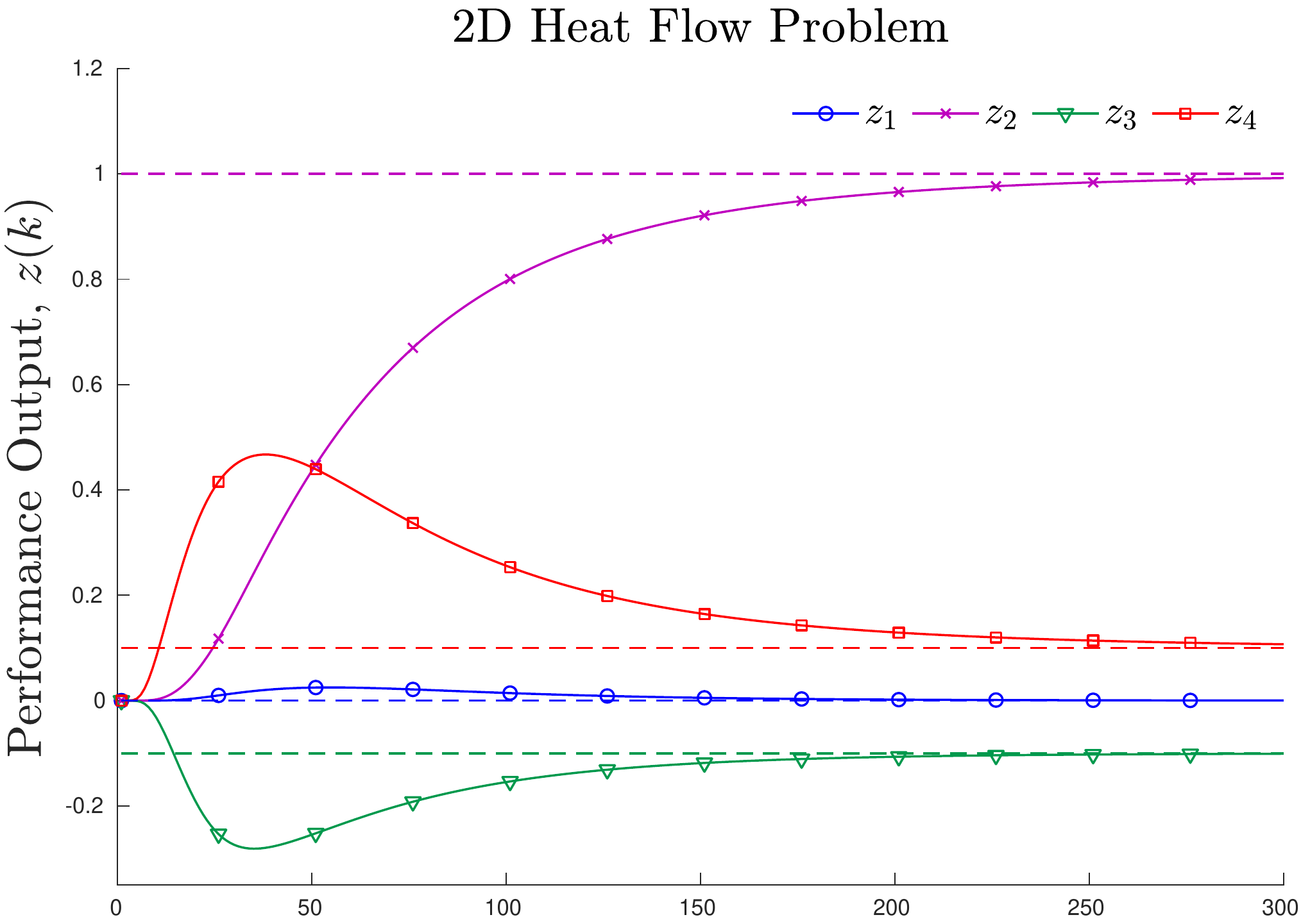}
    \end{subfigure}
    \begin{subfigure}[t]{0.45\textwidth}
        \includegraphics[width=\textwidth]{./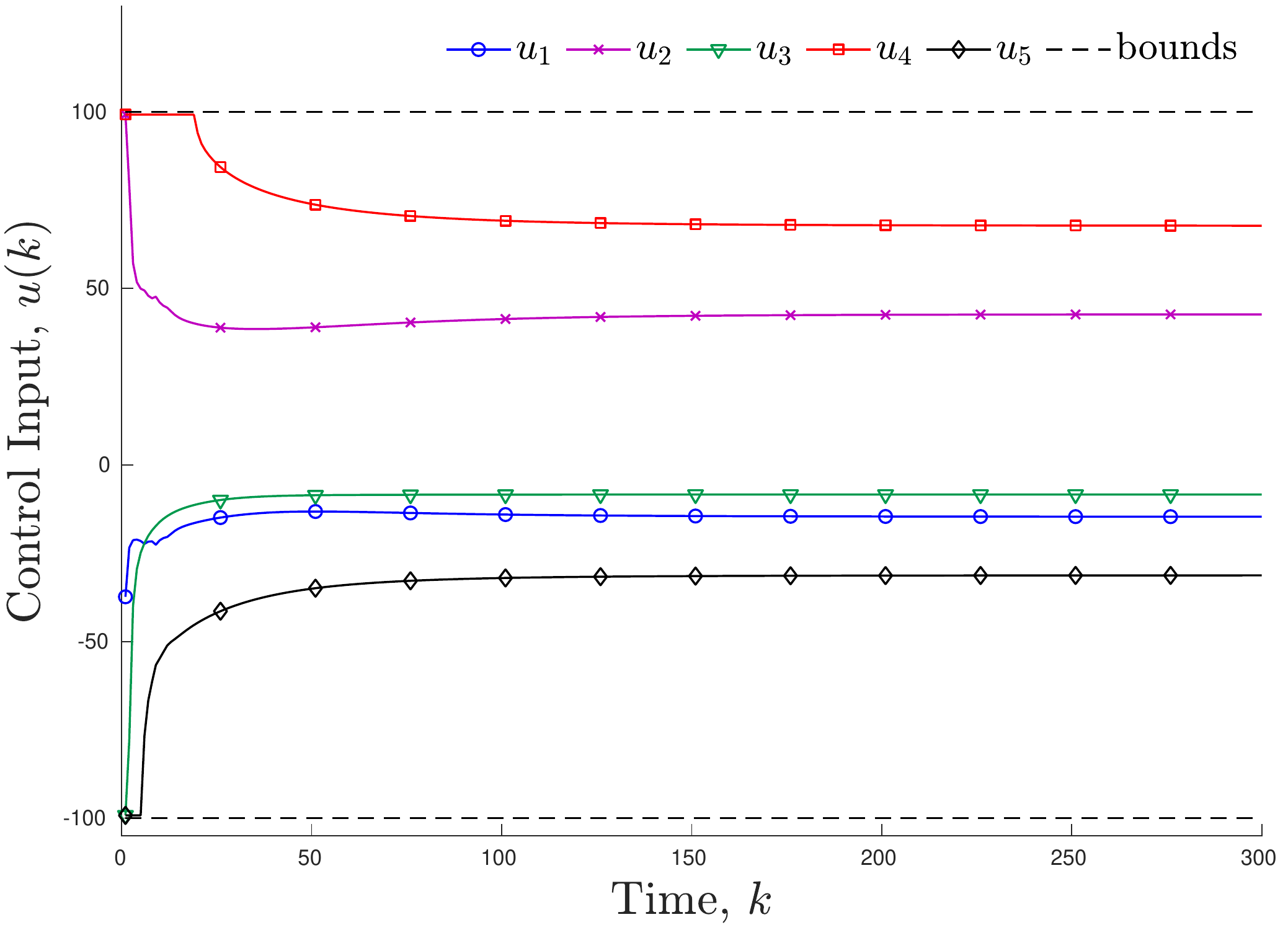}
    \end{subfigure}
    \caption{Simulation of the 2D heat flow setpoint tracking problem discussed in Section \ref{subsec:heatflow}. It can be seen that the control constraints are satisfied and the performance variable targets are tracked.}
\label{fig:heatflow}
\vspace{-.20in}
\end{figure}

\section{Conclusion} \label{sec:conclusion}
In this work we propose a novel method for computing bounds on the error induced by using reduced order models for control, which can be used to design a ROMPC scheme that guarantees constraint satisfaction for the full order system even in the presence of model reduction error, state estimation error, and bounded disturbances. The advantages of the proposed approach include both computational efficiency and reduction in conservatism that leads to better controller performance. The approach was demonstrated on a small synthetic system as well as on a high-dimensional 2D heat flow problem modeled using finite elements.

{\em Future Work:} A consideration that should be addressed in future work is how to compute the controller gains $K$ and $L$ such that Assumption \ref{ass:Aestable} is guaranteed to be satisfied, and to understand how their choice can decrease the error in the control scheme. Additionally, it would be interesting to explore conditions for which controllability and observability of the reduced order system are guaranteed.

\bibliographystyle{IEEEtran}
\bibliography{./main,./ASL_papers}

\end{document}